\newtheorem{theorem}{Theorem}
\newtheorem{lemma}{Lemma}
\newcommand{\IR}{\mathbb{R}}
\newcommand{\CTO}{\textsf{\sc CompareToOptimal}}
\newcommand{\MST}{\mathord{\it MST}}
\newcommand{\Yao}{\mathord{\it \Upsilon}_6(P,S)}
\newcommand{\keep}{\mathord{\it keep}}
\newcommand{\TRUE}{\mathord{\it true}}
\newcommand{\FALSE}{\mathord{\it false}}
\newcommand{\OP}{\mathord{\it output}}
\newcommand{\qed}{\rule{0.5em}{1.5ex}}
\newcommand{\fqed}{{\hfill~\qed}}
\newenvironment{proof}{{\noindent \bf Proof.}}
                      {{\hfill \fqed} \vspace{1em}}
\title{An Optimal Algorithm for the Euclidean Bottleneck Full Steiner 
       Tree Problem}
\author{
Ahmad Biniaz\thanks{School of Computer Science, Carleton University, 
                    Ottawa, Canada. Research supported by NSERC.}
\and 
Anil Maheshwari\footnotemark[1]
\and 
Michiel Smid\footnotemark[1]
}
\date{\today}
\begin{document} 

\maketitle 

\begin{abstract} 
Let $P$ and $S$ be two disjoint sets of $n$ and $m$ points in the plane, 
respectively. We consider the problem of computing a Steiner tree whose 
Steiner vertices belong to $S$, in which each point of $P$ is a leaf, 
and whose longest edge length is minimum. We present an algorithm that 
computes such a tree in $O((n+m)\log m)$ time, improving the previously 
best result by a logarithmic factor. We also prove a matching lower 
bound in the algebraic computation tree model. 
\end{abstract} 

\section{Introduction} 
Let $P$ and $S$ be two disjoint sets of $n$ and $m$ points in the plane, 
respectively. A \emph{full Steiner tree} of $P$ \emph{with respect to} 
$S$ is a tree $\mathcal{T}$ with vertex set $P \cup S'$, for some 
non-empty subset $S'$ of $S$, in which each point of $P$ is a leaf. 
Such a tree $\mathcal{T}$ consists of a \emph{skeleton tree}, which is 
the part of $\mathcal{T}$ that spans $S'$, and \emph{external edges}, 
which are the edges of $\mathcal{T}$ that are incident on the points 
of $P$.  

The \emph{bottleneck length} of a full Steiner tree is defined to be 
the Euclidean length of a longest edge. 
An \emph{optimal bottleneck full Steiner tree} is a full Steiner tree 
whose bottleneck length is minimum. 
In~\cite{aa-ebfst-11}, Abu-Affash shows that such an optimal tree can 
be computed in $O((n+m) \log^2 m)$ time. In this paper, we improve the 
running time by a logarithmic factor and prove a matching lower bound. 
That is, we prove the following result:  

\begin{theorem}  \label{thm1}  
Let $P$ and $S$ be disjoint sets of $n$ and $m$ points in the plane, 
respectively. An optimal bottleneck full Steiner tree of $P$ with 
respect to $S$ can be computed in $O((n+m) \log m)$ time, which is 
optimal in the algebraic computation tree model. 
\end{theorem}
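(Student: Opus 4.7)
My plan is to characterize the optimal bottleneck length $\lambda^*$ combinatorially, sparsify the relevant distance events via a Yao-6 argument, and compute $\lambda^*$ by a union-find sweep running in $O((n+m)\log m)$ time. The matching lower bound will follow from a reduction in the algebraic computation tree model.

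The characterization I would establish is: a full Steiner tree with bottleneck $\le\lambda$ exists if and only if some connected component $C$ of the Euclidean $\lambda$-ball graph on $S$ (edges being pairs in $S$ at Euclidean distance $\le\lambda$) has a point within distance $\lambda$ of every $p\in P$. The ``only if'' direction is immediate; the ``if'' direction takes a bottleneck spanning tree of $C$ as the skeleton and attaches each $p$ to its nearest point of $C$. Feasibility is monotone in $\lambda$, so $\lambda^*$ is the smallest feasible $\lambda$, and its value can change only at ``merge'' events (lengths of edges of $\MST(S)$, at which two components of the ball graph merge) or at ``reach'' events $d(p,s)$ (at which a terminal gains access to a new component).

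To avoid enumerating the $\Theta(nm)$ reach events, I would invoke the Yao-6 argument: for each $p\in P$, among all Steiner points lying in a fixed $60^\circ$ sector at $p$, only the sector-nearest one can trigger a new transition. Indeed, if $s'$ is the sector-nearest to $p$ and $s$ lies farther in the same sector, then $d(s,s')\le\max(d(p,s),d(p,s'))=d(p,s)$, so at $\lambda=d(p,s)$ the edge $(s,s')$ is already present in the Steiner-ball graph and $s,s'$ lie in one component. Hence only $6n$ reach events matter, and together with the $m-1$ edges of $\MST(S)$ they give $O(n+m)$ events in total. These can be produced in $O((n+m)\log m)$ time by constructing $\MST(S)$ and six rotated directional nearest-Steiner structures in $O(m\log m)$, then querying each $p\in P$ six times in $O(\log m)$. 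The main algorithm is a union-find sweep on $S$ over the events in increasing order of length: a merge event unites two components, a reach event tags a component with the corresponding terminal, and the sweep halts as soon as some component accumulates all $n$ tags, at which moment the current $\lambda$ equals $\lambda^*$. Using the presorted MST-edge list together with $n$ presorted per-terminal lists of length six, one obtains the $O((n+m)\log m)$ bound.

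The main obstacle is twofold: first, to rule out any loss of feasibility transitions when restricting to sector-nearest reach events, which reduces to the Yao sector inequality above; and second, to carry out the event sweep in $O((n+m)\log m)$ rather than the naive $O((n+m)\log(n+m))$, which exploits the constant length of the per-$p$ event lists together with the presorted MST-edge list. For the matching $\Omega((n+m)\log m)$ lower bound in the algebraic computation tree model, I would reduce from the multi-query set membership problem: given $m$ reals and $n$ queries, decide for each query whether it coincides with a database value; this has an $\Omega((n+m)\log m)$ lower bound by a standard Ben-Or connected-components argument. A routine geometric encoding places the $m$ database values as Steiner points on a horizontal line and the $n$ queries as terminals just above the corresponding $x$-coordinates, with parameters arranged so that the bottleneck length of the optimal full Steiner tree reveals all $n$ membership answers simultaneously.
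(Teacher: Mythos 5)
Your upper-bound plan rests on the same two ideas as the paper (the component characterization of feasibility and the $60^\circ$-cone inequality that reduces the reach events to six Yao neighbors per terminal), but your replacement of a binary search by a single Kruskal-style sweep does not deliver the claimed $O((n+m)\log m)$ bound as described. To run the sweep you must process all $O(n+m)$ events in increasing order; merging the $n$ six-element per-terminal lists with the sorted $\MST(S)$ edge list costs $\Theta(\log n)$ per event in a comparison-based model, giving $O((n+m)\log(n+m))$, which exceeds $O((n+m)\log m)$ when $n$ is super-polynomial in $m$. Moreover, ``halt when some component accumulates all $n$ tags'' requires maintaining the number of \emph{distinct} terminals tagging each component under union operations; a terminal may tag two components that later merge, and you give no mechanism avoiding double counting (small-to-large merging of tag lists again introduces a $\log n$ factor). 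The paper avoids both issues by binary searching only over the at most $m$ distinct $\MST(S)$ edge lengths with an $O(n+m)$ decision procedure that exploits the fact that at most six candidate components survive, and then extracting the exact $\lambda^*$ from the at most six candidate trees in the final interval. Your sweep is likely salvageable (e.g., bucket reach events into the $O(m)$ inter-merge epochs by binary search), but as written the stated time bound is not established.

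The lower-bound half has a more serious gap. Your source problem---$m$ input reals and $n$ query reals, decide membership of each query---does \emph{not} have an $\Omega((n+m)\log m)$ lower bound: for $n=O(1)$ it is solvable in $O(m)$ time by direct comparisons, so the $\Omega(m\log m)$ part of your claim is false, and consequently your reduction proves nothing in the regime where $n$ is small relative to $m$ (the paper handles this case by a separate reduction from the maximum-gap problem of Lee and Wu). In addition, the ``standard Ben-Or connected-components argument'' does not apply when the database is part of the input: the set of inputs for which all queries are members is connected (scale all coordinates simultaneously to the origin), so it has one component; the paper instead fixes the $m$ candidate points as constants $(i,1)$, which yields $m^n$ components and hence $\Omega(n\log m)$. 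Finally, your plan to read off the answers from the bottleneck length alone cannot work, since a non-member query can lie arbitrarily close to a database value, making the optimal bottleneck length arbitrarily close to the all-members value; the paper instead inspects the combinatorial structure of the computed tree (which Steiner point each terminal is attached to) and adds two anchor terminals to force the skeleton to span all of $S$.
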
 

\section{The algorithm} 

\subsection{Preprocessing} 
We compute a Euclidean minimum spanning tree $\MST(S)$ of the point 
set $S$, which can be done in $O(m \log m)$ time. Then we compute the 
bipartite graph $\Yao$ with vertex set $P \cup S$ that is defined as 
follows: Consider a collection of six cones, each of angle $\pi/3$ and 
having its apex at the origin, that cover the plane. For each point 
$p$ of $P$, translate these cones such that their apices are at $p$. 
For each of these translated cones $C$ for which 
$C \cap S \neq \emptyset$, the graph $\Yao$ contains one edge 
connecting $p$ to a nearest neighbor in $C \cap S$. (This is a variant 
of the well-known Yao-graph as introduced in~\cite{y-cmstk-82}.) 
Using an algorithm of Chang \emph{et al.}~\cite{cht-oacov-90}, 
together with a point-location data structure, the graph $\Yao$ can 
be constructed in $O( (n+m) \log m)$ time.  

The entire preprocessing algorithm takes $O( (n+m) \log m)$ time.  

\subsection{A decision algorithm} 
Let $\lambda^*$ denote the \emph{optimal bottleneck length}, i.e., 
the bottleneck length of an optimal bottleneck full Steiner tree of $P$ 
with respect to $S$. 

In this section, we present an algorithm that decides, for any given 
real number $\lambda>0$, whether $\lambda^* < \lambda$ or 
$\lambda^* \geq \lambda$. This algorithm starts by removing from 
$\MST(S)$ all edges having length at least $\lambda$, resulting in 
a collection $T_1,T_2,\ldots$ of trees. The algorithm then computes 
the set $J$ of all indices $j$ for which the following holds: 
Each point $p$ of $P$ is connected by an edge of $\Yao$ to some point 
$s$, such that (i) $s$ is a vertex of $T_j$ and (ii) the Euclidean 
distance $|ps|$ is less than $\lambda$. As we will prove later, 
this set $J$ has the property that it is non-empty if and only if 
$\lambda^* < \lambda$. The formal algorithm is given in 
Figure~\ref{figalg}. 

\begin{figure}
\begin{center}
\begin{quote}
\begin{tabbing} 
{\bf Algorithm} $\CTO(\lambda)$;  \\ 
remove from $\MST(S)$ all edges having length at least $\lambda$; \\
denote the resulting trees by $T_1,T_2,\ldots$; \\ 
number the points of $P$ arbitrarily as $p_1,p_2,\ldots,p_n$; \\ 
$J := \emptyset$; \\ 
{\bf for each} edge $(p_1,s)$ in $\Yao$  \\
{\bf do} \= $j :=$ index such that $s$ is a vertex of $T_j$; \\ 
         \> {\bf if} $|p_1 s| < \lambda$ \\
         \> {\bf then} $J := J \cup \{ j \}$ \\
         \> {\bf endif} \\
{\bf endfor}; \\
{\bf for} $i:=2$ {\bf to} $n$ \\ 
{\bf do} \= {\bf for each} $j \in J$ \\ 
         \> {\bf do} $\keep(j) := \FALSE$ \\ 
         \> {\bf endfor}; \\ 
         \> {\bf for each} edge $(p_i,s)$ in $\Yao$  \\  
         \> {\bf do} \= $j :=$ index such that $s$ is a vertex of 
                        $T_j$; \\ 
         \>          \> {\bf if} $j \in J$ and $|p_i s| < \lambda$ \\ 
         \>          \> {\bf then} $\keep(j) := \TRUE$ \\ 
         \>          \> {\bf endif} \\ 
         \> {\bf endfor}; \\  
         \> $J := \{ j \in J : \keep(j) = \TRUE \}$ \\ 
{\bf endfor}; \\ 
return the set $J$ 
\end{tabbing}
\end{quote}
\caption{This algorithm takes as input a real number $\lambda$ and 
         returns a set $J$. This set $J$ is non-empty if and only if 
         $\lambda^* < \lambda$.}
\label{figalg}
\end{center}
\end{figure}

Observe that, at any moment during the algorithm, the set $J$ has 
size at most six. Therefore, the running time of this algorithm is 
$O(n+m)$. 

Before we prove the correctness of the algorithm, we introduce the 
following notation. Let $j$ be an arbitrary element in the output set  
$J$ of algorithm $\CTO(\lambda)$. It follows from the algorithm that, 
for each $i$ with $1 \leq i \leq n$, there exists a point $s_i$ in 
$S$ such that 
\begin{itemize} 
\item $s_i$ is a vertex of $T_j$, 
\item $(p_i,s_i)$ is an edge in $\Yao$, and 
\item $|p_i s_i| < \lambda$. 
\end{itemize} 
We define $\mathcal{T}_j$ to be the full Steiner tree with skeleton tree 
$T_j$ and external edges $(p_i,s_i)$, $1 \leq i \leq n$. 
Observe that, since each edge of $T_j$ has length less than $\lambda$, 
the bottleneck length of $\mathcal{T}_j$ is less than $\lambda$. 
Therefore, we have proved the following lemma. 

\begin{lemma}  \label{lem1}  
Assume that the output $J$ of algorithm $\CTO(\lambda)$ is non-empty. 
Then $\lambda^* < \lambda$. 
\end{lemma}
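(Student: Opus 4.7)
The plan is to exhibit a concrete full Steiner tree of $P$ with respect to $S$ whose bottleneck length is strictly less than $\lambda$; since $\lambda^*$ is the \emph{minimum} bottleneck over all such trees, this immediately yields $\lambda^* < \lambda$. The natural candidate is precisely the tree $\mathcal{T}_j$ defined just before the lemma statement, for any index $j$ in the (non-empty, by hypothesis) output set $J$. So the proof reduces to showing that $\mathcal{T}_j$ is well-defined as a full Steiner tree, and then bounding its bottleneck length.

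First I would verify that $\mathcal{T}_j$ is a valid full Steiner tree. The skeleton $T_j$ is a non-empty connected subtree on some subset $S' \subseteq S$. For each $i$ with $1 \leq i \leq n$, the algorithm guarantees a point $s_i \in S'$ together with an edge $(p_i, s_i)$ of $\Yao$ satisfying $|p_i s_i| < \lambda$. Attaching each $p_i$ to $s_i$ by a single external edge keeps $\mathcal{T}_j$ acyclic (each $p_i$ is a new, distinct vertex with exactly one incident edge) and connected (every $p_i$ hangs off the connected skeleton $T_j$). Thus $\mathcal{T}_j$ is a tree with vertex set $P \cup S'$ in which every point of $P$ has degree one, matching the definition of a full Steiner tree.

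Next I would bound its bottleneck length. Every skeleton edge of $\mathcal{T}_j$ is an edge of $T_j$, and $T_j$ is a connected component of the graph obtained from $\MST(S)$ after deleting all edges of length at least $\lambda$, so every such edge has length strictly less than $\lambda$. Every external edge $(p_i, s_i)$ satisfies $|p_i s_i| < \lambda$ by the third bullet in the list of properties extracted from the algorithm. Therefore the bottleneck length of $\mathcal{T}_j$ is strictly less than $\lambda$, from which $\lambda^* < \lambda$ follows.

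There is really no hard step in this direction; the only point that needs mild care is checking that the construction of $\mathcal{T}_j$ does not accidentally violate the full Steiner tree conditions (acyclicity, and that each $p_i$ stays a leaf), and this is immediate because each $p_i$ appears exactly once and receives exactly one external edge. The nontrivial direction of the characterization, namely that $J$ is non-empty \emph{whenever} $\lambda^* < \lambda$, is where the Yao-graph properties and the choice of six cones should come into play; I would expect that to be the main obstacle in the next lemma rather than here.
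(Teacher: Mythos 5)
Your proof is correct and follows exactly the paper's argument: the paper also establishes this lemma by taking any $j$ in the non-empty output set $J$, forming the tree $\mathcal{T}_j$ with skeleton $T_j$ and external edges $(p_i,s_i)$, and observing that all skeleton edges (being edges of $\MST(S)$ surviving the deletion) and all external edges have length less than $\lambda$. Your additional check that $\mathcal{T}_j$ is a well-formed full Steiner tree is a harmless elaboration of what the paper leaves implicit.
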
 

The following lemma states that the converse of Lemma~\ref{lem1} 
also holds. 

\begin{lemma}  \label{lem2}  
Assume that $\lambda^* < \lambda$. Then the output $J$ of algorithm 
$\CTO(\lambda)$ has the following two properties: 
\begin{enumerate} 
\item $J \neq \emptyset$ and 
\item $J$ contains an element $j$ such that $\mathcal{T}_j$ is a full 
      Steiner tree, whose skeleton tree $T_j$ has bottleneck length 
      less than $\lambda$, and in which each external edge has length 
      at most $\lambda^*$. 
\end{enumerate} 
\end{lemma}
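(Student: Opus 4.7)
The plan is to exhibit a specific index $j^*$ that survives in $J$ throughout the execution, and for which one can choose witnesses $s_i$ such that the resulting external edges $(p_i,s_i)$ all have length at most $\lambda^*$.

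First I would fix an optimal bottleneck full Steiner tree $\mathcal{T}^*$, with skeleton tree $T^*$ spanning some non-empty $S^* \subseteq S$; every edge of $\mathcal{T}^*$ then has length at most $\lambda^* < \lambda$. For each $p_i \in P$, let $s^*_i \in S^*$ denote its unique neighbor in $\mathcal{T}^*$, so that $|p_i s^*_i| \leq \lambda^*$. Next I would invoke the bottleneck property of the Euclidean MST: for any two points of $S$, the maximum edge length on their $\MST(S)$-path equals the minimum, over all paths in the complete graph on $S$, of the maximum edge length. Since $T^*$ connects any two points of $S^*$ via a path whose maximum edge length is at most $\lambda^*$, the $\MST(S)$-path between any two points of $S^*$ uses only edges of length at most $\lambda^* < \lambda$. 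Consequently, after the removal step, all of $S^*$ lies in a single component $T_{j^*}$.

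The heart of the argument is to produce, for each $i$, a Yao-edge $(p_i,s'_i)$ with $s'_i \in V(T_{j^*})$ and $|p_i s'_i| \leq \lambda^*$. Consider the Yao cone $C$ at $p_i$ that contains $s^*_i$; by construction $\Yao$ contains an edge from $p_i$ to a nearest neighbor $s'_i$ of $p_i$ in $C \cap S$, so $|p_i s'_i| \leq |p_i s^*_i| \leq \lambda^*$. The standard estimate for a cone of angle $\pi/3$ (law of cosines applied to the triangle $p_i s'_i s^*_i$) gives $|s'_i s^*_i| \leq \max(|p_i s'_i|, |p_i s^*_i|) \leq \lambda^* < \lambda$. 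Applying the bottleneck MST property a second time to $s'_i$ and $s^*_i$, the $\MST(S)$-path between them uses only edges of length less than $\lambda$, so $s'_i$ lies in the same component as $s^*_i$, namely $T_{j^*}$.

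Finally, I would trace the algorithm on these witnesses: processing $p_1$ inserts $j^*$ into $J$ via the edge $(p_1,s'_1)$, and in each subsequent iteration the edge $(p_i,s'_i)$ sets $\keep(j^*) := \TRUE$, so $j^*$ is never removed. Hence $J$ is non-empty, and taking $s_i := s'_i$ exhibits $\mathcal{T}_{j^*}$ as a full Steiner tree whose skeleton $T_{j^*}$ has bottleneck length less than $\lambda$ and whose every external edge has length at most $\lambda^*$. The main obstacle I anticipate is cleanly separating the two uses of the bottleneck MST property, one to place the optimal skeleton inside a single $T_{j^*}$ and one to transfer from an optimal neighbor to a Yao-neighbor, and gluing them together with the $\pi/3$-cone estimate; once that geometric bridge between $\mathcal{T}^*$, $\Yao$, and $\MST(S)$ is established, the bookkeeping through the loops is immediate.
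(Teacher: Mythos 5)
Your proof is correct and follows essentially the same route as the paper: pick the Yao-neighbor $s'_i$ in the cone at $p_i$ containing the optimal neighbor, use the $\pi/3$-cone estimate to get $|s'_i s^*_i| \leq \lambda^* < \lambda$, use the MST path property to conclude $s'_i$ lies in the same component, and then trace the loops of $\CTO(\lambda)$. The one place you diverge is the first step: the paper invokes Abu-Affash's Lemma~2.1 to assume outright that the optimal skeleton $T^*$ is a subtree of $\MST(S)$, hence sits inside a single $T_j$, whereas you derive the same conclusion from the minimax (bottleneck) property of the Euclidean MST applied to pairs of points of $S^*$; your variant is self-contained and avoids the external structural lemma, at the cost of a second appeal to the minimax property, but the geometric core and the bookkeeping are identical.
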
 
\begin{proof} 
Consider an optimal bottleneck full Steiner tree, let $T^*$ be its 
skeleton tree, and denote its external edges by $(p_i,s_i)$, 
$1 \leq i \leq n$; thus, each $s_i$ is a vertex of $T^*$. Each edge
of this optimal tree has length at most $\lambda^*$. 

We may assume that $T^*$ is a subtree of $\MST(S)$; see 
Lemma~2.1 in Abu-Affash~\cite{aa-ebfst-11}. 
Since each edge of $T^*$ has length at most $\lambda^*$, which is less 
than $\lambda$, there exists an index $j$, such that $T^*$ is a subtree 
of $T_j$. We will prove that, at the end of algorithm $\CTO(\lambda)$, 
$j$ is an element of the set $J$. 

Let $i$ be any index with $1 \leq i \leq n$. Recall that the graph 
$\Yao$ uses cones of angle $\pi/3$. Consider the cone with apex $p_i$ 
that contains $s_i$. This cone contains a point $s'_i$ of $S$ such that 
$(p_i,s'_i)$ is an edge in $\Yao$. (It may happen that $s'_i = s_i$.) 
Since $|p_i s'_i| \leq |p_i s_i|$, we have 
$|s_i s'_i| \leq |p_i s_i| \leq \lambda^* < \lambda$.      

Consider the path in $\MST(S)$ between $s_i$ and $s'_i$. It follows from 
basic properties of minimum spanning trees that each edge on this path 
has length at most $|s_i s'_i| < \lambda$. Therefore, $s'_i$ is a 
vertex of the tree $T_j$. 

It follows from algorithm $\CTO(\lambda)$ that, when $p_i$ is considered, 
the index $j$ is added to $J$ if $i=1$, and $j$ stays in $J$ if 
$i \geq 2$. Thus, at the end of the algorithm, $j$ is an element of the 
set $J$, proving the first claim in the lemma. 

The full Steiner tree $\mathcal{T}_j$, having skeleton tree $T_j$ and 
external edges $(p_i,s'_i)$ for $1 \leq i \leq n$, satisfies the second 
claim in the lemma. 
\end{proof} 

\subsection{Binary search and completing the algorithm} 
Let $k$ denote the number of distinct lengths of the edges of $\MST(S)$, 
and let $\lambda_1 < \lambda_2 < \ldots < \lambda_k$ denote the sorted 
sequence of these edge lengths. Define $\lambda_0 := 0$ and 
$\lambda_{k+1} := \infty$. Using algorithm $\CTO$ to perform a binary 
search in the sequence $\lambda_0, \lambda_1  , \ldots , \lambda_{k+1}$, 
we obtain an index $\ell$ with $1 \leq \ell \leq k+1$, such that 
$\lambda_{\ell-1} \leq \lambda^* < \lambda_{\ell}$. 

Since algorithm $\CTO$ takes $O(n+m)$ time, the total time for the 
binary search is $O((n+m) \log m)$. 

Run algorithm $\CTO(\lambda_{\ell})$. Since 
$\lambda^* < \lambda_{\ell}$, it 
follows from Lemma~\ref{lem2} that, at the end of this algorithm, the 
set $J$ contains an index $j$ such that $\mathcal{T}_j$ is a full 
Steiner tree, whose skeleton tree $T_j$ has bottleneck length less 
than $\lambda$, and in which each external edge has length at most 
$\lambda^*$. Since $T_j$ is a subtree of $\MST(S)$, it follows that 
each edge of $T_j$ has length at most $\lambda_{\ell-1}$, which is at 
most $\lambda^*$. Thus, $\mathcal{T}_j$ is a full Steiner tree with 
bottleneck length at most $\lambda^*$. By definition of $\lambda^*$, 
it then follows that the bottleneck length of $\mathcal{T}_j$ is 
equal to $\lambda^*$.  

Thus, to complete the algorithm, we run algorithm 
$\CTO(\lambda_{\ell})$ and 
consider its output $J$. For each of the at most six elements $j$ of $J$, 
we construct the full Steiner tree $\mathcal{T}_j$ and compute its 
bottleneck length $\lambda_j^*$. For any index $j$ that minimizes 
$\lambda_j^*$, $\mathcal{T}_j$ is an optimal bottleneck full Steiner 
tree. This final step completes the algorithm and takes $O(n+m)$ time.   
This proves the first part of Theorem~\ref{thm1}. 

\section{The lower bound} 
In this section, we prove that our algorithm is optimal in the 
algebraic computation tree model; refer to Ben-Or~\cite{b-lbact-83} 
for the definition of this model.

\subsection{The case when $n$ is small as compared to $m$}  
We start by assuming that $n=O(m)$. We will prove that the problem of 
computing an optimal bottleneck full Steiner tree has a lower bound of 
$\Omega(m \log m)$, which is $\Omega((n+m) \log m)$. 

Consider a sequence $s_1,s_2,\ldots,s_m$ of real numbers. The 
\emph{maximum gap} of these numbers is the largest distance between 
any two elements that are consecutive in the sorted order of this 
sequence. Lee and Wu~\cite{lw-gcslp-86} have shown that, in the 
algebraic computation tree model, computing the maximum gap takes 
$\Omega(m \log m)$ time. 

Consider the following algorithm that takes as input a sequence 
$s_1,s_2,\ldots,s_m$ of real numbers: 
\begin{enumerate} 
\item Compute the minimum and maximum elements in the input sequence, 
      compute the absolute value $\Delta$ of their difference, and 
      compute the value $g = \Delta/(m+1)$. 
\item Compute the set $S = \{ (s_i,0) : 1 \leq i \leq m \}$, 
      a set $P_1$ consisting of $n/2$ points that are to the left 
      of $(s_1,0)$ and have distance at most $g/2$ to $(s_1,0)$, 
      a point set $P_2$ consisting of $n/2$ points that are to the right 
      of $(s_m,0)$ and have distance at most $g/2$ to $(s_m,0)$. 
      Let $P$ be the union of $P_1$ and $P_2$. 
\item Compute an optimal bottleneck full Steiner tree $\mathcal{T}$ of 
      $P$ with respect to $S$, and compute the length $\lambda^*$ of a 
      longest edge in $\mathcal{T}$. 
\item Return $\lambda^*$.       
\end{enumerate}  
Let $G$ be the maximum gap of the sequence $s_1,s_2,\ldots,s_m$, and 
observe that $G \geq g$. It is not difficult to see that $G = \lambda^*$. 
Thus, the above algorithm solves the maximum gap problem and, therefore, 
takes $\Omega(m \log m)$ time. Since $n=O(m)$, the running time of this 
algorithm is $O(m+n)=O(m)$ plus the time needed to compute $\mathcal{T}$. 
It follows that the problem of computing an optimal bottleneck full 
Steiner tree has a lower bound of $\Omega(m \log m)$.

\subsection{The case when $n$ is large as compared to $m$}  
We now assume that $n=\Omega(m)$. We will prove that the problem of 
computing an optimal bottleneck full Steiner tree has a lower bound of 
$\Omega(n \log m)$, which is $\Omega((n+m) \log m)$. 

A sequence $p_1,p_2,\ldots,p_n$ of points in the plane is specified 
by $2n$ real numbers. We identify such a sequence with the point 
$(p_1,p_2,\ldots,p_n)$ in $\IR^{2n}$. For each integer $i$ with 
$1 \leq i \leq m$, let $c_i$ be the point $(i,1)$. Define the subset 
$V$ of $\IR^{2n}$ as 
\[ V = \{ (p_1,p_2,\ldots,p_n) \in \IR^{2n} : 
           \{ p_1,p_2,\ldots,p_n\} \subseteq \{c_1,c_2,\ldots,c_m\} \} .
\]
For any function $f : \{1,2,\ldots,n\} \rightarrow \{1,2,\ldots,m\}$, 
define the point $P_f = ( c_{f(1)} , c_{f(2)} , \ldots , c_{f(n)} )$. 
Since there are $m^n$ such functions $f$, we obtain $m^n$ different 
points $P_f$, each one belonging to the set $V$. The set $V$ is in fact 
equal to the set of these $m^n$ points $P_f$ and, therefore, $V$ has 
exactly $m^n$ connected components. Thus, by Ben-Or's 
theorem~\cite{b-lbact-83}, any algorithm that decides whether a given 
point $(p_1,p_2,\ldots,p_n)$ belongs to $V$ has worst-case running time 
$\Omega(n \log m)$. 

Now consider the following algorithm that takes as input a sequence 
$p_1,p_2,\ldots,p_n$ of points in the plane: 
\begin{enumerate} 
\item Compute the set $S = \{ (i,0) : 1 \leq i \leq m \}$. 
\item Let $p=(0,0)$ and $q=(m+1,0)$, and compute the set 
      $P' = \{p,q\} \cup \{p_1,p_2,\ldots,p_n\}$. 
\item Compute an optimal bottleneck full Steiner tree $\mathcal{T}$ 
      of $P'$ with respect to $S$. 
\item Set $\OP = \TRUE$. 
\item For each $j$ with $1 \leq j \leq n$, do the following: 
      \begin{enumerate} 
      \item Let $i$ be the index such that $p_j$ and $(i,0)$ are 
            connected by an external edge in $\mathcal{T}$.  
      \item If $p_j \neq c_i$, set $\OP = \FALSE$. 
      \end{enumerate} 
\item Return $\OP$. 
\end{enumerate} 

If the output of the algorithm is $\TRUE$, then each $p_j$ is equal to 
some $c_i$ and, therefore, the point $(p_1,p_2,\ldots,p_n)$ belongs to 
the set $V$. 

Assume that $(p_1,p_2,\ldots,p_n) \in V$. The (unique) optimal bottleneck 
full Steiner tree of $P'$ with respect to $S$ is the union of 
(i) the path connecting the points of $S$ sorted from left to right 
(this is the skeleton tree), 
(ii) the edge connecting $p$ with $(1,0)$ and the edge connecting 
$q$ with $(m,0)$ (these are external edges), and 
(iii) edges that connect each point $p_j$ of $P$ to the point $c_i$ 
having the same $x$-coordinate as $p_j$ (these are also external edges). 
It then follows from the algorithm that the output is $\TRUE$. 

Thus, the algorithm correctly decides whether any given point 
$(p_1,p_2,\ldots,p_n)$ belongs to $V$. By the result above, the 
worst-case running time of this algorithm is $\Omega(n \log m)$. 
Since $m=O(n)$, the running time of this algorithm is $O(m+n)=O(n)$ 
plus the time needed to compute $\mathcal{T}$. 
It follows that the problem of computing an optimal bottleneck full 
Steiner tree has a lower bound of $\Omega(n \log m)$. 

This completes the proof of the lower bound in Theorem~\ref{thm1}. 

\bibliographystyle{plain}
\bibliography{bottleneckSteiner}

\end{document}